\documentclass[runningheads]{llncs}
\usepackage{verbatim}
\usepackage[T2A]{fontenc}
\usepackage{amsmath, graphicx}
\usepackage[utf8]{inputenc} 
\usepackage { tikz } 
\usetikzlibrary{arrows.meta} 
\usetikzlibrary{decorations.pathreplacing}

\let\no\textnumero
\def\rank{\mathop{rk}}
\title{Communication-efficient parallel Bruhat decomposition}
\author{Ioanna Evtushevskaya-Konovalova  \and
Alexander Tiskin }
\institute{St Petersburg State University}

\begin{document}
\maketitle

\begin{abstract}
The model of bulk-synchronous parallel (BSP) computation is an emerging paradigm of general-purpose parallel computing. Bruhat decomposition is an important method in numerical linear algebra, generalising ordinary LU decomposition while providing a symmetric structured way of expessing pivoting. Block-recursive algorithms have been developed in the past for various numerical linear algebra problems, including Bruhat decomposition; however, expressing recursive algorithms in the BSP model still remains a challenge. In this paper, we consider the communication and synchronisation complexity of Bruhat decomposition in the BSP model. We develop and analyze two communication-efficient algorithms based on different recursion schemes.
\end{abstract}
Keywords: bulk-synchronous parallel, Bruhat decomposition, communication, synchronisation, numerical linear algebra.

\section{Introduction}

Bruhat decomposition is a representation of a non-singular square matrix \(A\) as a product \(A = U_1 P U_2\), where \(U_1, U_2\) are upper-triangular matrices and \(P\) is a permutation matrix. Matrix \(P\) is uniquely determined. Bruhat decomposition is a natural generalization of the \(LU\) decomposition: indeed, if all leading principal minors of matrix $A$ are non-singular, the matrix $P$  becomes anti-diagonal, and then the product \(U_1P\) is precisely the lower-triangular matrix \(L\) of the decomposition $A=LU$. Classical algorithms for parallel computation of the \(LU\) decomposition are described e.g.\ in \cite{Higham:02}. In this work, we will construct a more general algorithms for Bruhat decomposition. 

For convenience, we reformulate the problem slightly. Let a non-singular square matrix \(A\) be given. We want to find non-singular matrices \(U\) (upper-triangular), \(L\) (lower-triangular), and \(P\) (permutation) such that the following holds:
\[
LAU = P.
\]
Indeed, given these matrices, it is easy to compute matrices for  Bruhat decomposition in its classical form. 

In this work, we develop and analyze two algorithms based on different recursion schemes. We describe them from the BSP model perspective, including parallelization and cost analysis. 

Throughout the paper, we ignore small irregularities arising from imperfect matching of parameters. For example, when we say that $p$ processors are partitioned into two equal groups, we assume implicitly that the groups' sizes may differ by $\pm 1$.

\section{Existing work}
Malaschonok in \cite{Malaschonok:10} describes parallel Bruhat decomposition algorithms that rely on the generalized Bruhat decomposition. His algorithm was considered outside any particular parallel computing model. We will consider a similar algorithm in the BSP model and we will estimate this algorithm's costs. Our second algorithm relies heavily on banded matrices, and the connection of the Bruhat decomposition to such matrices is described by Strang \cite{Strang:15}. However, Strang does not discuss his algorithm from the standpoint of parallel computation.
Generalized Bruhat decomposition was introduced by Grigoriev in \cite{Grigoriev:82_TCS} and Malaschonok in \cite{Malaschonok:10}. 
This decomposition preserves, in a certain sense, the rank stucture of matrix \(A\). 
\begin{definition}
     A \emph{subpermutation matrix} is a matrix, where every row and every column contains at most one non-zero entry, and every non-zero entry is equal to 1.
\end{definition}
\begin{definition}
 Given an $n \times n$ matrix \(A\), let  \(i_1, \dots, i_m\), \(j_1, \dots, j_k\) be the indices of its zero rows and columns. An $LEU$-decomposition is \(A = L E U\), where \(U\), \(L\) are non-singular \(n \times n\) upper- (respectively lower-)triangular matrices, and \(E\) is an \(n \times n\) subpermutation matrix, \(i_1, \dots, i_m\), \(j_1, \dots, j_k\) are the indices of zero rows and columns of matrix $E$, and \(\rank(E) = \rank(A)\).  Columns \(i_1, \dots, i_m\) of \(L\) and rows \(j_1, \dots, j_k\) of \(U\) coincide with the columns and rows of the identity matrix.
\end{definition}
Every square matrix has an \(LEU\)-decomposition. (This is proved e.g.\ in \cite{Malaschonok:10}.)

Various generalizations of the Bruhat decomposition have also been researched by Grigoriev  \cite{Grigoriev:82_TCS}, Dumas et al.  \cite{Dumas:17}. Odeh et al.\  \cite{Odeh:98}. 
Tiskin \cite{Tiskin:07_FGCS} has proposed BSP algorithms for the related problems of Gaussian elimination with pairwise pivoting and QR decomposition.

\section{Bulk Synchronous Parallelism}
\subsection{The BSP model}
We consider parallel computations in the BSP (Bulk Synchronous Parallel) model. This model was proposed by Valiant \cite{Valiant:90_CACM}; it is also described e.g.\ by Tiskin \cite{Tiskin:98_TCS}. The BSP model assumes several processors that can perform computations in parallel and exchange information with each other. Each processor has a local memory, and there is also a global memory accessible to all processors. We will denote the number of processors by $p$. For information exchange, a network is assumed that routes communication between pairs of processors. Parameter \(g\) is introduced as the reciprocal of network bandwidth; that is, sending/receiving a unit of information takes \(g\) conventional time units.
Thus, in this model, it is important to optimize not only the distribution of computation blocks but also the communication between them.

The computation in the BSP model is a sequence of supersteps. At the end of each superstep, a barrier synchronization occurs, ensuring that all processors have completed the current stage of local computation and network communication, and are ready to proceed to the next stage of computation. Thus, during a superstep, a processor cannot use the results of computations by other processors obtained in the same superstep, and is guaranteed to receive them only after the synchronization at the end of that superstep. Synchronization also signals that the superstep is complete and the next one can begin. Another parameter \(l\) is introduced as the cost of synchronization.
Thus, the number of supersteps (each involving a synchronization) also needs to be optimized.

Thus, the system has three main parameters: \(p\), \(g\),  \(l\).

We will use the following notation:
\begin{itemize}
    \item \(W(k,r)\) – computation cost on processor \(r\) at superstep \(k\),
    \item \(H(k,r)\) – communication cost on processor \(r\) at superstep \(k\),
    \item \(S\) – number of supersteps.
\end{itemize}

The computation and communication costs at superstep \(k\) are:
\[
W(k) = \max_r (W(k,r))\qquad
H(k) = \max_r (H(k,r))
\]

Thus, the total cost is calculated as:
\[
\sum_{i=0}^{S} \left( W(i) + g \cdot H(i) + l \right)
\]
In the BSP model, the computation cost must be evenly balanced. Specifically, if a sequential algorithm for a computational problem has complexity $T(n)$, then the corresponding parallel algorithm must achieve a computational cost of  $T(n)/p$ . Subject to this condition, the communication and synchronization costs are to be minimized.
It is also standard to assume the slackness condition $n \gg p$, which we adopt in this paper.

\subsection{BSP matrix multiplication}
When computing the Bruhat decomposition and the LEU decomposition, it is necessary to compute matrix products in parallel. Matrix multiplication can be carried out in parallel as follows, it can also be computed in parallel using the algorithm developed in
\cite{McColl-Tiskin:99}.  Let $A, B$ be $n \times n$ matrices, $C = AB$, and $p$ be the number of processors. We partition the matrices into $p^{2/3}$ square blocks of size $n / p^{1/3} \times n / p^{1/3}$. Denote the block size as $d = n / p^{1/3}$. The block of matrix $A$ formed by the intersection of rows $d(i-1)+1, d(i-1)+2, \dots, di$ and columns $d(j-1)+1, d(j-1)+2, \dots, dj$ is denoted by $A[i, j]$, and similarly for matrices $B$ and $C$. Then it is clear that

\[
C[i, j] = \sum_{k=1}^{p^{1/3}} A[i, k] B[k, j].
\]

Thus, it is necessary to compute $A[i, k] B[k, j]$ for all $1 \leq i, k, j \leq p^{1/3}$. Each processor is assigned a triple $(i, k, j)$ and computes the product $C[i, k, j] = A[i, k] B[k, j]$, writing the result to external memory. Matrix $C$ is obtained by summing the corresponding blocks: $C[i, j] = \sum_{k=1}^{p^{1/3}} C[i, k, j]$.

Each processor performs $O\bigl((n / p^{1/3})^3\bigr) = O(n^3 / p)$ elementary multiplications when computing the product of two blocks of size $n / p^{1/3}$, and sends one computed square block of the same size i.e.\ $O\bigl((n / p^{1/3})^2\bigr) = O(n^2 / p^{2/3})$. Thus, the costs for computation, communication, and synchronization are:

\[
\begin{aligned}
W(n, p) &= O(n^3 / p), \\
H(n, p) &= O(n^2 / p^{2/3}), \\
S(n, p) &= O(1).
\end{aligned}
\]

\section{Block-recursive algorithm}

We will solve the problem in the following form: it is required to find matrices \(L\), \(U\) such that
\[
LAU = P.
\]
Then, by computing the inverses \(L^{-1}\), \(U^{-1}\), we obtain the Bruhat decomposition. We will solve the problem iteratively. At iteration $k$ of the algorithm, we eliminate some elements of the current matrix $A^{(k)}$ by multiplication on the left with lower-triangular \(L^{(k)}\) and on the right with upper-triangular \(U^{(k)}\). Thus, we obtain a matrix \(A^{(k+1)}\) that will gradually become a permutation matrix. Matrix \(A^{(k)}\) is of the form
\[
\begin{pmatrix}
e^{(k)} & a_{01}^{(k)} \\
a_{10}^{(k)} & a_{11}^{(k)} 
\end{pmatrix}
\]
Here \(e^{(k)}\) is a block of size \(m \times m\) (where \(m \) is to be specified later.) This block is a subpermutation matrix. It may contain entirely zero rows or columns; we will call them \emph{special} and handle them in a special way. Note that the blocks \(a_{01}^{(k)}\) and \(a_{10}^{(k)}\) are rectangular, in general not square. In the following description of an individual algorithm iteration, we will reason about square blocks of size \(m\) and \(2m\). For convenience, blocks of size \(m\) are denoted by lowercase letters, and blocks of size \(2m\) by uppercase letters. Each block is processed  in parallel using recursive calls on smaller blocks, down to a certain sufficiently small block size \(n_0 \). Blocks of this size will be processed sequentially. 

\subsection{Recursive scheme}
The main idea is that when processing a certain block, we can reduce it to several matrix multiplications and the computation of an \(LEU\)-decomposition on smaller blocks. Computation will be distributed among processors, and at the bottom of the recursion tree, each processor will compute the $LEU$-decomposition for a separate block that is sufficiently small so that the $LEU$-decomposition can be computed sequentially. 

Our algorithm is described as a sequence of iterations. From the point of view of parallel computations, an iteration looks as follows: first, the $LEU$-decomposition for the upper-left block of the matrix is computed recursively; then the processors receive the bottom-left and upper-right blocks and compute certain auxiliary matrices. Each of these matrices is computed using the BSP matrix multiplication algorithm from Subsection 3.2; moreover, different matrices can also be computed in parallel by different groups of processors. Then the bottom-left and upper-right blocks are updated. Next, the \(LEU\)-decomposition is computed recursively for the bottom-left and upper-right blocks. After computing the \(LEU\)-decomposition for the bottom-left and upper-right blocks and updating the lower-right block via matrix multiplications, we make one more recursive call and compute the \(LEU\)-decomposition for this block. Finally, we obtain the $LEU$-decomposition for the entire matrix from block $LEU$-decompositions, again using the BSP matrix multiplications.

The algorithm is illustrated in Figure~\ref{f-block}.

\begin{figure}[tb]
\centering
\begin{tikzpicture}[scale=0.6] 
\filldraw[color=black!60, fill=gray!50, very thick] (0, 3) rectangle (4, -1);
\filldraw [color=black!60, fill=green!50, very thick] (1,2) rectangle (0,3);
\filldraw [color=black!60, fill=gray!50, very thick] (2,1) rectangle (1,2);
\filldraw [color=black!60, fill=gray!50, very thick] (1, 2) rectangle (2,3);
\filldraw [color=black!60, fill=gray!50, very thick] (0,1) rectangle (1,2);
\end {tikzpicture} 
\qquad
\begin {tikzpicture}[scale=0.6] 
\filldraw[color=black!60, fill=gray!50, very thick] (0, 3) rectangle (4, -1);
\filldraw [color=black!60, fill=green!50, very thick] (1,2) rectangle (0,3);
\filldraw [color=black!60, fill=gray!50, very thick] (2,1) rectangle (1,2);
\filldraw [color=black!60, fill=green!50, very thick] (1, 2) rectangle (2,3);
\filldraw [color=black!60, fill=green!50, very thick] (0,1) rectangle (1,2);
\end {tikzpicture} 
\qquad
\begin {tikzpicture}[scale=0.6] 
\filldraw[color=black!60, fill=gray!50, very thick] (0, 3) rectangle (4, -1);
\filldraw [color=black!60, fill=green!50, very thick] (1,2) rectangle (0,3);
\filldraw [color=black!60, fill=green!50, very thick] (2,1) rectangle (1,2);
\filldraw [color=black!60, fill=green!50, very thick] (1, 2) rectangle (2,3);
\filldraw [color=black!60, fill=green!50, very thick] (0,1) rectangle (1,2);
\end {tikzpicture} 
\qquad
\begin {tikzpicture}[scale=0.6] 
\filldraw[color=black!60, fill=gray!50, very thick] (0, 3) rectangle (4, -1);
\filldraw [color=black!60, fill=green!50, very thick] (1,2) rectangle (0,3);
\filldraw [color=black!60, fill=green!50, very thick] (2,1) rectangle (1,2);
\filldraw [color=black!60, fill=green!50, very thick] (1, 2) rectangle (2,3);
\filldraw [color=black!60, fill=green!50, very thick] (0,1) rectangle (1,2);
\filldraw [color=black!60, fill=green!50, very thick] (0,-1) rectangle (2, 1);
\filldraw [color=black!60, fill=green!50, very thick] (4, 3) rectangle (2,1);
\filldraw [color=black!60, fill=green!50, very thick] (0, 3) rectangle (2,1);
\end {tikzpicture}
\caption{\label{f-block} Block-recursive algorithm}
\end{figure}

Consider the process of reducing a matrix to  subpermutation form using \(LEU\)-decomposition.
Suppose that at some point we have obtained a matrix \(B\) of size \(m \times m\) partitioned into blocks of size \(m/2 \times m/2\):
\[
B = \begin{pmatrix} b_{00} & b_{01} \\ b_{10} & b_{11} \end{pmatrix}
\]

\paragraph{1. \(LEU\)-decomposition for \(b_{00}\)}

This is obtained recursively as
\[
l_{00} b_{00} u_{00} = e_{00}
\]
where \(e_{00}\) is a subpermutation matrix. Transformation matrices are formed as
\[
L_1 = \begin{pmatrix} l_{00} & 0 \\ 0 & I \end{pmatrix} \qquad
U_1 = \begin{pmatrix} u_{00} & 0 \\ 0 & I \end{pmatrix}
\]
where \(I\) is the identity matrix of conforming size.

After applying these transformations, the current matrix takes the form
\[
L_1BU_1=
B_1 = \begin{pmatrix} e_{00} & b_{01}' \\ b_{10}' & b_{11} \end{pmatrix}
\]
\paragraph{2. Partial elimination of rows in \(b_{01}'\) and columns in \(b_{10}'\)}

After the previous step, we have some non-zero rows/columns in $e_{00}$.  If $e_{00}$ has a non-zero column indexed $i$, we can obtain a zero column indexed $i$ in $b_{10}'$. Similarly, we can obtain a zero row indexed $i$ in $b_{01}'$. 
Thus, elements in blocks \(b_{01}'\) and \(b_{10}'\) are eliminated using lower-triangular and upper-triangular matrices:
\[
L' = \begin{pmatrix} I & 0 \\ l' & I \end{pmatrix} \qquad
U' = \begin{pmatrix} I & u' \\ 0 & I \end{pmatrix}
\]
where
\[
l' = -b_{10}' e_{00}^{\top}, \qquad u' = -b_{01}' e_{00}^{\top}.
\]
After applying these transformations, the current matrix takes the form
\[
L'B_1U'=B_2 = \begin{pmatrix} e_{00} & b_{01}^{*} \\ b_{10}^{*} & b_{11}^{'} \end{pmatrix}
\]

\paragraph{3. \(LEU\)-decomposition for \(b_{01}^{*}\) and \(b_{10}^{*}\)}

This is obtained recursively as
\[
l_{01}^{*} b_{01}^{*} u_{01}^{*} = e_{01}^{*}\qquad
l_{10}^{*} b_{10}^{*} u_{10}^{*} = e_{10}^{*}
\]
Transformation matrices are formed as
\[
L_2 = \begin{pmatrix} l_{01}^{*} & 0 \\ 0 & l_{10}^{*} \end{pmatrix} \qquad
U_2 = \begin{pmatrix} u_{01}^{*} & 0 \\ 0 & u_{10}^{*} \end{pmatrix}
\]
After applying these transformations, the current matrix takes the form
\[
L_2B_2U_2 = B_3 = \begin{pmatrix} e_{00} & e_{01} \\ e_{10} & b_{11}'' \end{pmatrix}
\]

\paragraph{4. Partial elimination in rows and columns of block \(b_{11}''\)}

Similarly to step 2, we obtain matrices \(L'', U''\) 
\[
L'' = \begin{pmatrix} I & 0 \\ l'' & I \end{pmatrix} \qquad
U'' = \begin{pmatrix} I & u'' \\ 0 & I \end{pmatrix}
\]
\[
l'' = -b_{11}'' p_{01}^{\top}, \qquad u'' = -b_{11}''p_{10}^{\top}
\]

After applying these transformations, the current matrix takes the form
\[
L''B_3U''=B_3' = \begin{pmatrix} e_{00} & e_{01} \\ e_{10} & b_{11}^* \end{pmatrix} 
\]
\paragraph{5. \(LEU\)-decomposition for the \(b_{11}^*\)} 

This is obtained recursively as
\[
l_{11}^{*} b_{11}^{*} u_{11}^{*} = e_{11}^{*}
\]
with transformation matrices
\[
L_3 = \begin{pmatrix} I & 0 \\ 0 & l_{11}^{*} \end{pmatrix} \qquad
U_3 = \begin{pmatrix} I & 0 \\ 0 & u_{11}^{*} \end{pmatrix}
\]
After applying these transformations, the current matrix takes the form
\[
L_3B_3U_3=B_4 = \begin{pmatrix} e_{00} & e_{01} \\ e_{10} & e_{11} \end{pmatrix} = E
\]

\paragraph{Summary}
Overall, the original matrix \(B\) is reduced to sub-permutation form by successively multiplying on the left by lower-triangular matrices and on the right by upper-triangular matrices:
$
L_1 L' L_2 L'' L_3 \; B \; U_3 U'' U_2 U' U_1 = E,
$
where \(E\) is the resulting sub-permutation matrix. This is equivalent to finding the \(LEU\)-decomposition of \(B\). If $B$ is non-singular, than $E$ is a permutation matrix.

\subsection{Cost analysis}

We can derive the following recurrence relations for our algorithm's costs of computation, communication, and synchronization. Let $m$ be the matrix size and $q$ the number of processors in the current recursive call. 
As mentioned earlier, at the bottom of the recursion tree, each processor will process a separate block that is sufficiently small to be processed sequentially. The size of this block can be varied. On one hand, it must be small enough for uniform division of computation. 
On the other hand, partitioning the matrix into overly small blocks increases synchronization cost. 

Let $n_0 = n / p^{\alpha}$ be the block size, where $\alpha$ is a parameter to be defined later. By varying $\alpha$, we select the optimal block size $n_0$. Obviously, the recursion depth depends linearly on $\alpha$. The computation, communication and synchronization costs in this scheme can be described by the following recurrences:
\begin{align*}
W(m, q) &= 2W(m/2, q) + W(m/2, q/2) + O(m^3/q) \\
H(m, q) &= 2H(m/2, q) + H(m/2, q/2) + O(m^2/q^{2/3})\\
S(m, q) &= 2S(m/2, q) + S(m/2, q/2) + O(1)
\end{align*}
Indeed, processing a block of size $m$ with $q$ processors proceeds as follows: first, all $q$ processors process the upper-left block of size $m/2$; then the processors are split into two equal groups of $q/2$ processors and process the upper-right and the bottom-left blocks of size $m/2$ in parallel; then all $q$ processors process the bottom-right block of size $m/2$. Moreover, in addition to recursive calls, matrix multiplications are needed to update the blocks. 
The base values for the recurrences are provided by the respective costs of sequential computation:
\begin{align*}
W(n_0, q) = O(n_0^3) \qquad
H(n_0, q) = O(n_0^2) \qquad
S(n_0, q) = O(1)
\end{align*}

From the above recurrences, we can obtain explicit expressions for $W(n, p)$, $H(n, p)$ and $S(n, p)$.

When analyzing algorithms, we will always omit the base of the logarithm, implying the binary logarithm.
Let $\mu = \log(m)$, $\lambda = \log(q)$ and introduce new functions:
\[
\tilde{W}(\mu, \lambda) = W(m, q) \qquad  
\tilde{H}(\mu, \lambda)  = H(m, q) \qquad
\tilde{S}(\mu, \lambda)  = S(m, q)
\]
Let $k = \log(n)$. Then the recurrences become:
\begin{gather*}
\tilde{W}(\mu, \lambda) = 2\tilde{W}(\mu - 1, \lambda) + \tilde{W}(\mu - 1, \lambda - 1) + O\left(\frac{2^{3\mu}}{2^\lambda}\right)\\
\tilde{H}(\mu, \lambda) = 2\tilde{H}(\mu - 1, \lambda) + \tilde{H}(\mu - 1, \lambda - 1) + O\left(\frac{2^{2\mu}}{2^{2\lambda/3}}\right)\\
\tilde{S}(\mu, \lambda) = 2\tilde{S}(\mu - 1, \lambda) + \tilde{S}(\mu - 1, \lambda - 1) + O(1)
\end{gather*}

These recurrences gives rise to a dag (directed acyclic graph) whose vertices are points in the 2D integer lattice. Their coordinates are interpreted  as follows.  The horizontal coordinate of a vertex equals the logarithm of the number of processors; the vertical coordinate equals the logarithm of the size of the block being processed. Each recursion level consists of vertices with a fixed vertical coordinate. Deepening the recursion corresponds to moving from the root up and to the left, as shown in Figure~\ref{f-tree}. 
Each vertex of the dag has two child vertices. 
In particular, vertex $(\mu, \lambda)$ has two child vertices: 
$(\mu - 1, \lambda)$, connected to the parent by a vertical edge, 
and $(\mu - 1, \lambda - 1)$, connected to the parent by a diagonal edge. 
All the vertical edges are doubled, while diagonal ones are not. (This corresponds to coefficients 1 and 2 in the recurrence.)

\begin{figure}[tb]
    \centering
    
\begin{tikzpicture}[scale=0.6]
  \node[circle, fill=white, scale=0.5pt] (x) at (2,-1) {}; 

  \node[circle, fill= white, scale=0.5pt] (y) at (11,6) {};
  \node[circle, fill= black, scale=0.5pt] (lambda0) at (10,6) {};
  \node[circle, fill= white] (Q) at (10,6.6) {$\lambda$};
  
  \node[circle, fill=white] (S) at (6,6.6) {$\lambda-2$};
  \node[circle, fill=white] (a) at (8,6.6) {$\lambda-1$};
  \node[circle, fill=black, scale=0.5pt] (lambda2) at (6,6){};
  \node[circle, fill=black, scale=0.5pt] (lambda1) at (8,6){};
  \node[circle, fill=black, scale=0.5pt] (t) at (2,2) {};
  \node[circle, fill=black, scale=0.5pt] (k) at (2,4) {};
  \node[circle, fill=white, scale=0.5pt] (z) at (2,6) {};

  \node[circle, fill=white] (t) at (1,2) {$\mu-1$};
  \node[circle, fill=white] (k) at (1,4) {$\mu-2$};
  \node[circle, fill=white] (i) at (1,0) {$\mu$};
  \node[circle, fill=black, scale=0.5pt] (i) at (2,0) {};
  
    \node[circle, fill=red, scale=0.5pt] (A) at (10,0) {};
    
     \node[circle, fill=red, scale=0.5pt] (B) at (8,2) {};
     \node[circle, fill=red, scale=0.5pt ] (C) at (10,2) {};

     \node[circle, fill=red, scale=0.5pt ] (D) at (10,4) {};

     \node[circle, fill=red, scale=0.5pt] (E) at (8,4) {};

     \node[circle, fill=red, scale=0.5pt] (F) at (6,4) {};

    \draw[->, thick] (A) -- (B);
    \draw[->, double] (A) -- (C);
    \draw[->, thick] (B) -- (F);
    \draw[->, double] (B) -- (E);
    \draw[->, double] (C) -- (D);
    \draw[->, thick] (C) -- (E);
    \draw[->, thick] (z) -- (y);
    \draw[->, thick] (z) -- (x);
\end{tikzpicture}
    \caption{\label{f-tree} Recursion dag for the block-recursive algorithm}
\end{figure}

To express the costs $\tilde{W}(\mu, \lambda)$, $\tilde{H}(\mu, \lambda)$, $\tilde{S}(\mu, \lambda)$ we write for a vertex $v$ with coordinates $(\mu, \lambda)$
\begin{gather*}
w(v) = O\left(\frac{2^{3\mu}}{2^\lambda}\right) \qquad
h(v) = O\left(\frac{2^{2\mu}}{2^{2\lambda/3}}\right) \qquad
s(v) = O(1) \\
r(v) = \text{the number of paths from the root to $v$}
\end{gather*}
We have
\begin{gather*}
\textstyle
\tilde{W}(\nu, \pi) = \sum_v w(v) r(v) \qquad
\tilde{H}(\nu, \pi) = \sum_v h(v) r(v) \qquad
\tilde{S}(\nu, \pi) = \sum_v s(v) r(v)
\end{gather*}
where the sums range across all the vertices of the recursion dag.
Consider level $i$ of this dag: vertices at this level have coordinates of the form $(k-i, i-j)$ for all $0 \leq j \leq i$. A path to such a vertex is of length $i$ with exactly $j$ vertical edges. Since vertical edges are doubled, traversing a vertical edge doubles the number of paths. Thus, the total number of paths equals $\binom{i}{j} 2^j$.

Next, we will separately sum up the terms corresponding to vertices of each level. First, we obtain the computation cost. Summing up the terms $w(v)$ corresponding to vertices of level $i$:
\begin{gather*}
\textstyle
\tilde{W}_i = \sum_{v=(k-i,j), j=0,\dots,i} w(v) r(v) =
\sum_{j=0}^{i} \binom{i}{j} 2^{j+3(k-i)+(i-j)} \frac{1}{p} =\\ 
\textstyle
\frac{1}{p} \sum_{j=0}^{i} \binom{i}{j} 2^{3k-2i} =
2^{3k-2i} \frac{1}{p} \sum_{j=0}^{i} \binom{i}{j} = 2^{3k-i} \frac{1}{p}
\end{gather*}
Then, summing up over all levels, we obtain:
\[
\textstyle\sum_{i=0}^{k} \tilde{W}_i = 2^{3k} \sum_{i=0}^{k} 2^{-i} \frac{1}{p} = O(2^{3k}) \frac{1}{p} = O\left(\frac{n^3}{p}\right)
\]

Now we similarly obtain the communication cost. Summing the terms corresponding to vertices of level $i$, we obtain
\begin{gather*}
\tilde{H}_i = \textstyle \sum_{v=(k-i,j), j=0,\dots,i} h(v) r(v) = \sum_{j=0}^{i} \binom{i}{j} 2^{j+2(k-i)+2(i-j)/3} \cdot \frac{1}{p^{2/3}} \\  \textstyle= \sum_{j=0}^{i} \binom{i}{j} 2^{2k-4i/3 + j/3} \cdot \frac{1}{p^{2/3}}  = 2^{2k-4i/3} \frac{1}{p^{2/3}} \sum_{j=0}^{i} \binom{i}{j} 2^{j/3} = 2^{2k-4i/3} \frac{1}{p^{2/3}} (1 + 2^{1/3})^i
\end{gather*}
Then, summing over all levels, we obtain:
\begin{gather*}
\textstyle \sum_{i=0}^{k} \tilde{H}_i = 2^{2k} \frac{1}{p^{2/3}} \sum_{i=0}^{k} 2^{-4i/3} (1+2^{1/3})^i = 2^{2k} \frac{1}{p^{2/3}} \sum_{i=0}^{k} \bigl(2^{-4/3} (1+2^{1/3})\bigr)^i \approx\\\textstyle\approx 2^{2k} \frac{1}{p^{2/3}} \sum_{i=0}^{\infty} 0.89^i \approx 2^{2k+3} \frac{1}{p^{2/3}} = O\left(\frac{n^2}{p^{2/3}}\right)
\end{gather*}

Now we similarly obtain the synchronization cost:
\[
S(n, p) =\textstyle O\Bigl(\left(\sum_{i=0}^{k} \left(\sum_{j=0}^{i} 2^j \binom{i}{j}\right)\right)\Bigr) = O\Bigl( \sum_{i=0}^{k} 3^i\Bigr) = O(3^k)
\]

Overall we have obtained the following asymptotic costs:
\begin{gather*}
\textstyle W(n, p) = O\left(\frac{n^3}{p^{2/3}}\right)\\
\textstyle H(n, p) = O\left(\frac{n^2}{p^{2/3}}\right) + O(3^k) = O\left(\frac{n^2}{p^{2/3}}\right)\\\textstyle
S(n, p) = O(3^k) = O(n^{\log3}) \approx O(n^{1.58})
\end{gather*}

Recall that the recursion tree is evaluated in parallel while the block size is large,
and switches to sequential evaluation once a critical block size $n_0 = n/p^\alpha$ is reached.
Denote $k_0 = \log(n/n_0)$. For every vertex \(u=(k_0, \lambda - j)\), we delete all its descendants and make it a leaf. We have
\[
\begin{aligned}
w(u) = O(2^{3k_0}) \qquad
h(u) = O(2^{2k_0}) \qquad
s(u) = O(1) \qquad
r(u) = r(u)
\end{aligned}
\]

We denote the set of inner vertices (vertices at levels $0$ to $k_0-1$) and leaves (level $k_0$) by $\mathit{Inner}$, $\mathit{Leaf}$, respectively.

Using the previous analysis of the recurrence dag, we can bound the contribution of the $\mathit{Inner}$ terms as follows:
\begin{gather*}
\textstyle\tilde{W}_{\mathit{inner}} \leq O\left(\frac{n^3}{p}\right)\qquad
\textstyle\tilde{H}_{\mathit{inner}} \leq O\left(\frac{n^2}{p^{2/3}}\right)
\end{gather*}
The bound on synchronization is obtained from the previous analysis by substituting $k_0$ for $k$:
\[
\tilde{S}_{\mathit{inner}} = \textstyle O\left(\sum_{i=0}^{k_0} \left(\sum_{j=0}^{i} 2^j \binom{i}{j}\right)\right) = O( \sum_{i=0}^{k_0} 3^i) = O(3^{k_0}) = O\left(\left(\frac{n}{n_0}\right)^{\log 3}\right)
\] 

Independently, we sum up the $\mathit{Leaf}$ terms:
\begin{align*}
\tilde{W}_{\mathit{leaf}} &\textstyle= \sum_{v \in \mathit{Leaf}} w(v) r(v) = \sum_{j=0}^{k_0} \binom{k_0}{j} 2^{3(k-k_0)} 2^j \\
&\textstyle= 2^{3(k-k_0)} \sum_{j=0}^{k_0} \binom{k_0}{j} 2^j = 2^{3(k-k_0)} 3^{k_0}
\\
&\textstyle= 2^{3(\log n - \log(n/n_0))} 3^{\log(n/n_0)} = n_0^3 \left(\frac{n}{n_0}\right)^{\log 3} \\
&\textstyle= O\left(\left(\frac{n}{p^\alpha}\right)^3\right) p^{\alpha \log 3} = O\left(\frac{n^3}{p^{\alpha(3-\log 3)}}\right)\\
\tilde{H}_{\mathit{leaf}} &\textstyle= \sum_{v \in \mathit{Leaf}} h(v) r(v) = \sum_{j=0}^{k_0} \binom{k_0}{j} 2^{j+2(k-k_0)} = 2^{2(k-k_0)} \sum_{j=0}^{k_0} \binom{k_0}{j} 2^j \\ 
&\textstyle= O(n_0^2) 3^{k_0} = O\left(\left(\frac{n}{p^\alpha}\right)^2\right) p^{\alpha \log 3}\\
\tilde{S}_{\mathit{leaf}} & = O(3^{k_0})
\end{align*}
Since we need $\tilde{W}_{\mathit{leaf}} = O\left(\frac{n^3}{p}\right)$, it must hold $\alpha \geq \alpha_{\min} = 1/(3-\log 3) \approx 0.70$.

Finally, we sum up the $\mathit{Inner}$ and $\mathit{Leaf}$ expressions together:
\begin{align*}
&\textstyle W(n, p) = \tilde{W}_{\mathit{inner}} + \tilde{W}_{\mathit{leaf}} = O\left(\frac{n^3}{p}\right) \\ 
& \textstyle H(n, p) = \tilde{H}_{\mathit{inner}} + \tilde{H}_{\mathit{leaf}} =\\
&\qquad= O\left(\frac{n^2}{p^{2/3}}\right) + O\left(\left(\frac{n}{p^\alpha}\right)^2\right) p^{\alpha \log(3)} = O\left(\frac{n^2}{p^{2/3}}\right) + O\left(\frac{n^2}{p^{\alpha(2-\log 3)}}\right)\\
&H(n, p) = 
\begin{cases}
O\left(\frac{n^2}{p^{2/3}}\right) & \alpha \geq \frac{2}{3(2-\log 3)} = \alpha_{\max}  \approx 1.59\\
O\left(\frac{n^2}{p^{\alpha(2-\log 3)}}\right) & \text{otherwise}
\end{cases}
\\
&S(n, p) = O(3^{k_0})
\end{align*}
Thus, we only need to consider $\alpha_{\min}\leq\alpha \leq \alpha_{\max}$, since $\alpha\geq\alpha_{\min}$ is needed for the optimal local computation cost, and for $\alpha>\alpha_{\max}$, the communication cost of matrix multiplication dominates, while the syncronization cost is higher than that of $\alpha = \alpha_{\max}$.
The parameter $\alpha$ describes a trade-off between communication and synchronization costs.
\section{Strip-recursive algorithm}
In this section, we describe an algorithm based on a different recursion scheme, which we call antidiagonal recursion. We partition the matrix into antidiagonal strips. The algorithm proceeds by splitting a strip into two half-sized strips and processing them recursively; between the two recursive calls, the matrix needs to be updated in a carefully controlled manner in order to optimize communication.

\subsection{Banded Matrices}
The algorithm operates on matrices of aim special form.
\begin{definition}
Matrix $A$ is called \emph{lower banded of bandwidth $k$} if
\[
A_{ij} \neq 0 \;\Rightarrow\; 0 \leq i - j \leq k.
\]
An \emph{upper banded} matrix is defined symmetrically.
\end{definition}

\begin{lemma}
Let matrices $A, B$ be lower banded of bandwidth $k$. Then the matrix $AB$ is lower banded of bandwidth $2k$.
\end{lemma}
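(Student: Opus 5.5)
The plan is to argue directly from the entrywise formula for the product and check each term against the banded condition. Write
\[
(AB)_{ij} = \sum_{t} A_{it} B_{tj},
\]
and suppose $(AB)_{ij} \neq 0$. Then at least one summand is non-zero, so there is an index $t$ with $A_{it} \neq 0$ and $B_{tj} \neq 0$. The goal is to show $0 \leq i-j \leq 2k$, which is exactly the statement that $AB$ is lower banded of bandwidth $2k$.

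The key step is the telescoping identity $i - j = (i - t) + (t - j)$. By the definition of a lower banded matrix of bandwidth $k$ applied to $A$, the condition $A_{it} \neq 0$ gives $0 \leq i - t \leq k$. Applying the same definition to $B$, the condition $B_{tj} \neq 0$ gives $0 \leq t - j \leq k$. Adding the two chains of inequalities yields $0 \leq i - j \leq 2k$, as required.

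No step here is a real obstacle. The only point needing care is the logical direction: the lemma asserts only an implication of the form ``non-zero entry $\Rightarrow$ inside the band.'' It is therefore enough to exhibit one non-vanishing summand, and possible cancellation among summands is irrelevant. The argument also goes through unchanged for rectangular conforming matrices.

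I would finish by remarking that the same proof gives a more general statement: if $A$ and $B$ are lower banded of bandwidths $k_1$ and $k_2$, then $AB$ is lower banded of bandwidth $k_1 + k_2$. The symmetric statement for upper banded matrices follows either by transposition or by the identical computation.
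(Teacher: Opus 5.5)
Your proof is correct and complete: a nonzero entry $(AB)_{ij}$ forces some $t$ with $A_{it}\neq 0$ and $B_{tj}\neq 0$, and $i-j=(i-t)+(t-j)$ then gives $0\le i-j\le 2k$. The paper states this lemma without any proof, treating it as immediate, so your entrywise argument is just the routine verification it leaves implicit. Your $k_1+k_2$ generalization is also sound.
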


\begin{lemma}
Let $A, B$ be $n \times n$ matrices, lower banded of bandwidth $m$. Then the product $AB$ can be computed in parallel on $p$ processors with the following costs:
\begin{gather*}
W  \textstyle = O\!\left(\frac{m^2 n}{p}\right),\qquad
H \textstyle = O\!\left(\frac{m^{4/3} n^{2/3}}{p^{2/3}}\right),\qquad
S \textstyle = O(1).
\end{gather*}
\end{lemma}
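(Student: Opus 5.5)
The plan is to reduce the banded product to a collection of independent dense $m\times m$ block products, and then to split into two regimes according to how $p$ compares with the number $n/m$ of such products. In the regime $p\ge n/m$, the block products are distributed over processor groups and the BSP matrix multiplication algorithm of Subsection~3.2 is used. In the regime $p<n/m$, a strip data distribution is used so that only halo data has to move. The only earlier result invoked is the BSP multiplication bound $W=O(n^3/p)$, $H=O(n^2/p^{2/3})$, $S=O(1)$. I assume, as is standard in BSP, that the algorithm may fix the initial distribution of $A$ and $B$ and the final distribution of $C$.

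\emph{Block structure.} Partition the index range into $n/m$ consecutive segments of length $m$. Write $A[I,J]$, $B[I,J]$, $C[I,J]$ for the $m\times m$ blocks. Since $A,B$ are lower banded of bandwidth $m$, the only nonzero blocks are $A[I,I]$ and $A[I,I-1]$, and likewise for $B$. By the preceding lemma $C=AB$ is lower banded of bandwidth $2m$, and its nonzero blocks are
\[
C[I,I]=A[I,I]B[I,I],\qquad C[I,I-1]=A[I,I]B[I,I-1]+A[I,I-1]B[I-1,I-1],\qquad C[I,I-2]=A[I,I-1]B[I-1,I-2].
\]
So for each segment index $I$ the product amounts to at most four dense $m\times m$ multiplications. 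This gives $O(n/m)$ multiplications in total, of total work $O(m^3\cdot n/m)=O(m^2n)$.

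\emph{Case $p\ge n/m$.} Assign to each index $I$ a group of $q=pm/n\ge 1$ processors. Each group performs its four $m\times m$ products (two of them summed) using the algorithm of Subsection~3.2, with the groups running simultaneously in the same supersteps. Per processor this gives
\[
W=O(m^3/q)=O(m^2n/p),\qquad H=O(m^2/q^{2/3})=O\bigl(m^2 (n/(pm))^{2/3}\bigr)=O\bigl(m^{4/3}n^{2/3}/p^{2/3}\bigr),\qquad S=O(1).
\]
Summing the two products contributing to $C[I,I-1]$ stays inside the group and adds only $O(m^2/q^{2/3})$ to the communication.

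\emph{Case $p<n/m$.} This is the step I expect to be the main obstacle. If blocks are naively shipped to processors, each processor receives $\Theta(mn/p)$ data, which exceeds the target bound. Instead, I would give processor $r$ a contiguous range of $n/(pm)$ segment indices and store the corresponding block rows of $A$, $B$ and $C$ locally from the start. The only nonlocal data processor $r$ needs is the block row $B[I_0-1,\cdot]$ just below its range, where $I_0$ is the first segment index it owns; this is $O(m^2)$ entries, fetched in one superstep. All products are then computed locally, giving $W=O(m^3\cdot n/(pm))=O(m^2n/p)$, $H=O(m^2)$ and $S=O(1)$. To finish, I would check that $m^2\le m^{4/3}n^{2/3}/p^{2/3}$, which is equivalent to $m^{2/3}\le (n/p)^{2/3}$, i.e.\ $p\le n/m$. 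This holds in the present case, so $H=O(m^{4/3}n^{2/3}/p^{2/3})$. Combining the two cases yields the stated bounds for all $p$.
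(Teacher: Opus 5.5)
Your proof is correct, but it is organized differently from the paper's. The paper works directly on the sheared iteration space $\{(i,k,j): 0\le i-k\le m,\ 0\le k-j\le m\}$. It cuts the band of $A$ into diagonal strips and then into parallelograms, obtaining $p$ near-cubic cuboids of side $s=m^{2/3}n^{1/3}/p^{1/3}$ whose faces give the $H$ bound.

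You instead tile the band with axis-aligned $m\times m$ blocks, which reduces the product to $O(n/m)$ dense block products. Each is handed to $q=pm/n$ processors running the algorithm of Subsection~3.2. Since $m/q^{1/3}=s$, your cubes have exactly the paper's side length, so in this regime the two partitions rest on the same idea. Yours needs no new geometric construction and inherits correctness, including the summation of partial products, directly from the dense algorithm.

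Your case split also makes explicit something the paper's construction tacitly assumes. A cube of side $s$ fits inside the band only if $s\le m$, i.e.\ $p\ge n/m$.

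Your second case relies on the algorithm choosing the data layout, and that assumption is genuinely necessary there. In the global-memory setting used in Subsection~3.2, some processor performs $\Omega(m^2n/p)$ multiplications. It must therefore touch $\Omega(mn/p)$ entries of $A$, because each $A_{ik}$ meets at most $m+1$ entries of $B$. The quantity $mn/p$ exceeds $m^{4/3}n^{2/3}/p^{2/3}$ exactly when $p<n/m$, so without your layout assumption the stated bound cannot hold in that regime.

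This is harmless for the paper. In the strip-recursive algorithm every bandwidth satisfies $m\ge n_0=n/p^{\alpha}\ge n/p$ whenever $\alpha\le 1$, which covers the range $\alpha\in[1/2,2/3]$ considered. So only your first case is ever invoked, and it works without any layout assumption.
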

\begin{proof}
Each of the matrices contains $O(mn)$ nonzero elements. Each element of matrix $A$ is multiplied by $O(m)$ elements of matrix $B$, so the computation contains $O(m^2 n)$ elementary multiplications. Our goal is to partition these operations into equal near‑cubic cuboid blocks. We do this as follows: partition the nonzero part of matrix $A$ (i.e., the strip of width $m$ below main diagonal) into strips parallel to main diagonal of width $O(m^{2/3} n^{1/3} p^{1/3})$; then partition each strip horizontally into regular parallelograms of equal width (Figure~\ref{f-strip}). Proceed similarly with the second matrix, but partition the strips into parallelograms vertically.

\begin{figure}
\centering
\begin{tikzpicture}[scale=0.4]
%
    \draw[black] (0,0)--(0,8)--(8,0)--(0,0);
    \draw[black] (4,0)--(0,4);
    \draw[red] (6,0)--(0,6);
    \draw[red] (5,0)--(0,5);
    \draw[red] (7,0)--(0,7);
    \node[draw=none, fill=none] at (-1.7,7.5) {$\frac{m^{2/3}n^{1/3}}{p^{1/3}}$};
    \draw[decorate, decoration=brace] (0,7) -- (0,8);
    \draw[red] (0,7)--(1,7);
    \draw[red] (0,6)--(2,6);
    \draw[red] (0,5)--(3,5);
    \draw[red] (0,4)--(4,4);
    \draw[red] (1,3)--(5,3);
    \draw[red] (2,2)--(6,2);
    \draw[red] (3,1)--(7,1);
    \draw[dotted, red] (1,3)--(8,3);
    \draw[dotted, red] (1,4)--(8,4);
    \draw[decorate, decoration={brace, mirror}] (8,3) -- (8,4);
    \node[draw=none, fill=none] at (9.7,3.5) {$\frac{m^{2/3}n^{1/3}}{p^{1/3}}$};
  \end{tikzpicture}
\caption{\label{f-strip} Multiplication of lower banded matrices}
\end{figure}

The product of each pair of blocks requires $O(m^2 n)$ operations. After computing the product of a pair of blocks, $O((m^{2/3} n^{1/3})^2) = O(m^{4/3} n^{2/3})$ values are obtained, which must be transferred between processors to compute the values of the product matrix. Thus, multiplying matrices of this type requires computation cost $W = O(m^2 n / p)$ and communication cost $H = O(m^{4/3} n^{2/3} / p^{2/3})$. The number of supersteps is obviously constant: $S = O(1)$.
\end{proof}
\subsection{Antidiagonal strips}

Next, we consider an algorithm based on the following recursive scheme. The matrix is partitioned into anti-diagonal strips of width $n_0 = n/p^{\alpha}$. The algorithm processes these strips sequentially, from left-to-right and top-down. In lemma 3 we describe parallel algorithm for reducing in BSP model. We call a strip \textit{reduced} if every row (respectively, column) of the matrix contains at most one nonzero entry within the strip, and that entry is $1$. A strip of width $n_0$ can be reduced by multiplying the matrix on the left and right by banded matrices of bandwidth $n_0$. 
Before reduction, preprocessing of the current strip is required in order to eliminate certain rows and columns. This elimination is performed by multiplying on the left and right by banded matrices of bandwidth $2n_0$.

\begin{lemma}
An antidiagonal strip of width $n_0$ of an $n \times n$ matrix can be reduced by multiplying the matrix on the left and right by banded matrices, which are also of size $n \times n$ and bandwidth $n_0$. This reduction can be performed in the BSP model with computation cost $W = O(n_0^3)$ and communication cost $H = O(n_0^2)$.
\end{lemma}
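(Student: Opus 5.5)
The plan is to reduce the problem on the strip to a small, essentially sequential elimination confined to a window of size $O(n_0)$. After the preprocessing described above, the rows and columns already occupied by pivots in earlier strips have been cleared inside the current strip. The nonzero entries of an antidiagonal strip of width $n_0$ then interact only within a sliding window. Entry $(i,j)$ of the strip can be eliminated by a row operation from row $i'$ with $i<i'$ (lower-triangular) or by a column operation from column $j'$ with $j'>j$ (upper-triangular) only if the pivot $(i',j)$ or $(i,j')$ also lies in the strip. That forces $|i-i'|\le n_0$ and $|j-j'|\le n_0$.

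Consequently, any elimination that uses pivots lying inside the strip is realised by a lower-triangular $L$ and an upper-triangular $U$ whose nonzero off-diagonal entries satisfy $0\le i-j\le n_0$ (respectively $0\le j-i\le n_0$). These are exactly lower and upper banded matrices of bandwidth $n_0$, and this settles the first claim of the lemma.

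The reduction itself will be organised as a Bruhat/$LEU$-style elimination along the strip, from top-left to bottom-right. I would choose as pivot in each column the lowest nonzero entry of the strip, which is what the Bruhat convention of eliminating with $L$ from below and $U$ from the right suggests. Using row operations from below and column operations to the right, I would clear the remaining entries in that pivot's row and column within the strip, and then scale the pivot to $1$ using the diagonals of $L$ and $U$. Each pivot touches only $O(n_0)$ rows and columns, each carrying $O(n_0)$ strip entries. To get the cost $O(n_0^3)$ rather than $O(n\,n_0^2)$, I would partition the strip into $n/n_0$ square diamonds of size $n_0$. Each diamond's $LEU$-decomposition then costs $O(n_0^3)$ by the sequential base case. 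Updating the neighbouring diamonds is a product of $n_0\times n_0$ blocks, again $O(n_0^3)$. If the diamonds are distributed over the processors, each processor handles $O(1)$ diamond blocks per phase. Communication per processor is then $O(n_0^2)$, since each exchanges the boundary blocks and the transformation blocks of its neighbours, which has size $O(n_0^2)$.

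The main obstacle is the dependency chain along the strip: eliminating in one diamond changes the next diamond, which seems to force a sequential pass. I would try to break this by first computing the $LEU$ of all diamonds independently in parallel, and then resolving the interactions between adjacent diamonds. Between neighbours these interactions occur only across the shared $n_0\times n_0$ off-diagonal block, so a constant number of rounds of local updates and further local $LEU$ calls on $O(n_0)$-sized blocks should suffice. This is analogous to steps 2–5 of the block-recursive scheme. The delicate point is proving that fill-in stays within bandwidth $n_0$, so that no long-range dependency arises. For this I would rely on the preprocessing by the bandwidth-$2n_0$ matrices, which has already cleared the rows and columns containing earlier pivots. If this cannot be made constant-round, I would fall back to the weaker claim that each processor's work on its share is $O(n_0^3)$ and its communication $O(n_0^2)$ per superstep, which is how I read the stated $W$ and $H$.
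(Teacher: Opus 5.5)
\textbf{There is a genuine gap: the cost bounds are not proved.} Your bandwidth argument is fine, apart from the orientation problem noted at the end. The cost part is the real content of the lemma, and it is left open. Everything depends on resolving the interactions between neighbouring pieces in $O(1)$ supersteps, and you only assert that a constant number of rounds ``should suffice''.

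The fallback does not rescue the lemma. The bounds $W=O(n_0^3)$ and $H=O(n_0^2)$ are totals for the whole strip. They are used as the base case of the recurrence together with $S(n_0)=O(1)$. A sweep along the $n/n_0$ pieces instead costs on the order of $n/n_0$ supersteps, $n n_0^2$ work and $n n_0$ communication. Over the $O(p^{\alpha})$ base strips this would break both $W=O(n^3/p)$ and $S=O(p^{\alpha})$.

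Your diamond partition is also what makes a one-round fix-up hard. Diamonds cut along diagonals split both rows and columns between neighbours. After independent local reductions, a row and a column can each keep two nonzeros. Clearing a row conflict by a column transvection can then create a new entry in another row of the strip, which needs a row transvection, and so on.

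\textbf{The paper's device.} The paper's proof sidesteps the two-directional conflict by an asymmetric partition. The strip is cut horizontally into blocks of $n_0$ consecutive \emph{whole} rows, one block per processor. Each row of the strip then lies in a single block. Each column of the strip occupies at most $n_0+1$ consecutive rows, so it meets at most two adjacent blocks. After each processor reduces its own block, every row has at most one nonzero and every column at most two, lying in adjacent blocks. Each processor sends its block to the owner of the next block. That processor removes the lower entry of every such column by a single row transvection from the upper row. Since the upper row has no other nonzero in the strip, nothing new is created inside the strip. One exchange therefore suffices, at $O(n_0^2)$ communication and $O(n_0)$ work. This whole-row partition, which confines the residual conflicts to columns and makes the fix-up one-directional, is the idea your proposal lacks.

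Be aware that the paper's sketch treats the block reductions as independent. A column transvection made inside one block acts on the whole column, and hence also on the neighbouring block's part of the strip. So the interaction you worried about is not entirely absent there either.

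\textbf{Orientation.} Your elimination directions are reversed. With $L$ lower-triangular acting on the left, row $i$ can only receive multiples of rows $i'<i$. With $U$ upper-triangular acting on the right, column $j$ can only receive multiples of columns $j'<j$. So a pivot must clear the entries below it and to its right, as in the paper (``with rows below and columns to the right''). Taking the lowest nonzero of each column as pivot would force an upper-triangular left factor. This does not affect the bandwidth bound, but your triangularity claim is wrong as written.
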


\begin{proof}
To reduce the strip, partition it horizontally into blocks of $n_0$ rows each (there will be between $1$ and $p$ blocks). Then each processor processes one block assigned to it. We perform the reduction using a sequence of transvections on each row and each column with rows below and columns to the right within the same block. Now each block contains at most one nonzero element in each row and each column. Each column intersects at most two blocks, hence contains at most two nonzero elements. Then each processor sends the resulting block to the processor handling the next block, and if necessary, row transvection is performed to eliminate the second nonzero element in the column. Note that each row is subtracted from at most $n_0$ rows below, and each column from at most $n_0$ columns to the right; therefore, the transvection matrices contain nonzero elements only in the strip of width $n_0$ below (above) the main diagonal. Thus, reduction of the strip is accomplished by multiplying on the left and right by lower‑banded and upper‑banded matrices of bandwidth $n_0$, respectively. Processing a block of the indicated size requires computation cost $O(n_0^3)$ and communication cost $O(n_0^2)$; the communication and computation costs of the second stage are $O(n_0^2)$ and $O(n_0)$, respectively.
\qed
\end{proof}

We introduce a numbering of the antidiagonals from left to right and top to bottom. Antidiagonal $r$ consists of elements $A_{i, j}$, where $i + j - 1 = r$. The strip consisting of consecutive antidiagonals numbered $m, m+1, \dots, m+k$ inclusive will be denoted by $A\langle m, m+k\rangle$.
\begin{lemma}
Suppose that strip $A\langle k, k + m\rangle$  has been reduced and let $i_1,\dots,i_k$, $j_1,\dots,j_k$ be the indices of rows and columns, containing nonzero elements in this strip. By multiplying $A$ on the left by a lower‑banded matrix $L$ and on the right by an upper‑banded matrix $U$, both of bandwidth $2m$, we can obtain a matrix $A'$, where $A'\langle k+m, k + 2m\rangle$ consists only of zero elements in rows $i_1,\dots,i_k$ and columns $j_1,\dots,j_k$.
\end{lemma}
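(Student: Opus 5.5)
The plan is to imitate step 2 of the block-recursive scheme, where the pivots of $e_{00}$ were used to clear $b_{10}'$ and $b_{01}'$, but to apply the clearing only inside the next strip. Since $A\langle k,k+m\rangle$ is reduced, each row $i_s$ contains exactly one nonzero entry of the strip, equal to $1$ and located at $(i_s,j_s)$. The same holds for each column $j_s$. Moreover $i_s+j_s-1\in[k,k+m]$. Call these entries the \emph{pivots}. We will clear the entries of $A\langle k+m,k+2m\rangle$ in the pivot columns by row transvections, and the entries in the pivot rows by column transvections.

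\textbf{Step 1 (column clearing).} Take an entry $A_{i,j_s}\neq 0$ with $i+j_s-1\in[k+m,k+2m]$ and $i\neq i_s$. Subtract $A_{i,j_s}$ times row $i_s$ from row $i$. Since $i+j_s\ge k+m+1 > i_s+j_s$, we have $i>i_s$, so this is a lower-triangular transvection. Also $i-i_s = (i+j_s)-(i_s+j_s)\le (k+2m+1)-(k+1)=2m$, so the transvection lies in the band of width $2m$. Collecting all of them gives $L=I-\sum_{s}\sum_i A_{i,j_s}E_{i,i_s}$. Because the row indices of distinct pivots are distinct, $L$ is unit lower-banded of bandwidth $2m$. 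Applying all transvections simultaneously is legitimate: each source row $i_s$ is not itself modified by any other pivot's transvection, since row $i_s$ has no other nonzero entry in the reduced strip. So $LA$ has zero entries at positions $(i,j_s)$ inside $\langle k+m,k+2m\rangle$ with $i\neq i_s$.

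\textbf{Step 2 (row clearing).} Do the same on the right for $LA$, with $U=I-\sum_s\sum_j (LA)_{i_s,j}E_{j_s,j}$ over $j>j_s$ such that $(i_s,j)$ lies in the target strip. By the symmetric computation, $U$ is unit upper-banded of bandwidth $2m$.

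\textbf{Main obstacle.} The delicate point, which I would check carefully, is that the two steps do not interfere with each other and do not reintroduce nonzeros.

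First, consider how $L$ acts on the pivot rows. Row $i_s$ is added to lower rows, which changes entries $(i,j)$ with $j\neq j_s$. Entries of row $i_s$ to the right of the reduced strip, lying on antidiagonals greater than $k+m$, get shifted into row $i$ on antidiagonals greater than $k+m+(i-i_s)$. These may land in the target strip, but they are in non-pivot columns or are handled by $U$.

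Second, I must verify that $U$ does not destroy the zeros created by $L$. Column $j_s$ is subtracted from column $j$ only as column $j_s$ of $LA$. In the target strip, column $j_s$ has a nonzero entry only at row $i_s$, and that entry is outside the strip. Hence the entries of column $j$ change only in rows where column $j_s$ is nonzero, and those lie outside the target rows of the strip, or at row $i_s$, where they are exactly cancelled. Similarly, for $j$ equal to another pivot column $j_t$, the entry $(i,j_t)$ changes by $(LA)_{i_s,j_t}(LA)_{i,j_s}$, which vanishes for $i\neq i_s$ in the strip.

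If entries shifted into pivot columns by $U$ turn out to matter, I would instead order the operations by antidiagonal index, processing within the band. This is still a product of transvections of span at most $2m$, and a product of unit lower-banded matrices of span $2m$ whose supports are disjoint in source rows stays banded of width $2m$.
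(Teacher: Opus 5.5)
Your $L=I-\sum_s\sum_i A_{i,j_s}E_{i,i_s}$ is exactly the matrix the paper writes down ($L_{r,i}=-A_{r,j}$ for every pivot $(i,j)$ and every $r$ with $(r,j)$ in the target), and your triangularity and bandwidth checks are fine. There is a genuine gap, though: your claim that all transvections can be applied simultaneously is false. The paper's appeal to ``direct computation'' passes over the same point.

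A pivot row $i_t$ \emph{is} hit by the transvection of another pivot $s$ whenever $(i_t,j_s)$ is a nonzero entry of the target. Reducedness of $A\langle k,k+m\rangle$ says nothing about entries to the right of the strip. In that case $(LA)_{r,j_s}=-\sum_{t\neq s}A_{r,j_t}A_{i_t,j_s}$, summed over the pivots $t\neq s$ whose transvection touches row $r$, and this need not vanish.

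Concretely, let $k=m=3$ (strip $4\le i+j\le 7$, target $7\le i+j\le 10$). Take $A_{1,5}=A_{3,3}=1$, $A_{3,5}=a$, $A_{5,3}=b$, $A_{5,5}=c$, and all other entries zero. The strip is reduced, and the pivot rows and columns even vanish to its left. Your $L$ is $I-aE_{3,1}-cE_{5,1}-bE_{5,3}$, and $(LA)_{5,5}=c-c-ab=-ab\neq 0$, although $(5,5)$ lies in the target and in pivot column $5$. Rows $1$ and $3$ of $LA$ are already zero in the target, so your $U=I$ and the conclusion fails.

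Your ``main obstacle'' paragraph only checks that $U$ preserves the zeros of $LA$, so it presupposes exactly what fails here. The fallback does not help as stated. A product of unit lower-banded factors with distinct source rows can leave the band: $(I-bE_{r,i_t})(I-aE_{i_t,i_s})$ contains $abE_{r,i_s}$, with $r-i_s$ up to $4m$. Ordering by antidiagonal produces such a term. For example, with $m=3$, pivots $(2,19)$ and $(7,14)$ in $A\langle 20,23\rangle$, and nonzeros at $(7,19)$ and $(13,14)$, you get $L_{13,2}=ab$, at offset $11>2m$.

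What works is to process the pivots in \emph{decreasing} order of row index, reading each multiplier from the current matrix.
\begin{itemize}
\item \textbf{No fill.} The rows changed while processing pivot $t$ lie strictly below $i_t$, so every source row is still the original row when it is used. Hence $L=I-X$ with $X$ supported on $\{(r,i_s):(r,j_s)\ \text{in the target}\}$, and the bandwidth is $2m$.
\item \textbf{Multipliers.} They are not $A_{r,j_s}$ but the solution of the unit-triangular system $\sum_t X_{r,i_t}A_{i_t,j_s}=A_{r,j_s}$. In the example, $L_{5,1}=-(c-ab)$, and then $(LA)_{5,5}=0$.
\item \textbf{No recontamination.} A column $j_s$ that has been cleared is not refilled by a later pivot $u$ with $i_u<i_s$, because $(i_u,j_s)$ lies in the strip or to its left. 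This needs pivot rows and columns to vanish to the left of the strip. That holds in the algorithm, but it is not among the lemma's stated hypotheses.
\end{itemize}
After this $L$, no pivot row meets a different pivot column inside the target. Then your simultaneous $U$ built from $LA$ does work: the cross terms $(LA)_{i_s,j_t}(LA)_{i_t,c}$ vanish in the target, and your non-interference argument becomes valid.
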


\begin{proof}
Let us construct the elimination matrix for the columns (the elimination matrix for the rows is constructed similarly). For all $i, j$, where $A_{ij} = 1$ and $k+1\leq i + j \leq k+m+1$, i.e., the element $A_{ij}$ belongs to strip $A\langle k,k+m\rangle$, let $L_{r,i} = -A_{r,j}$ for $k+m +1 - j \leq r \leq k+ 2m +1 - j$. 
Other elements of $L$ are defined as $L_{i,j} = 
\begin{cases}
    0, & i\neq j \\
    1, & i=j
\end{cases}
$. Direct computation verifies that matrix $L$ does indeed eliminate the required columns. It is also clear that it is lower‑banded of bandwidth $2m$.
\qed
\end{proof}

\subsection{Recursive scheme}
The algorithm takes strip $A\langle k, k + m\rangle$ as input and reduces it as follows:

\paragraph{1. Reduction of $A\langle k, k +m/2\rangle$} This is performed recursively, thus obtaining elimination matrices $L_m$ and  $U_m$ of bandwidth $2m$; $L_mAU_m = A_1$
\paragraph{2. Update on $A_1\langle k + m/2 +1, m\rangle$} Thus obtaining elimination matrices $L'_m$ and $U'_m$, with which we eliminate certain rows and columns in the lower-right strip: more precisely, those that already contain a nonzero element in $A\langle k, k + m/2\rangle$. Matrices $L'_m$ and $U'_m$ are lower-banded and upper‑banded of bandwidth $2m$, respectively; $L'_mA_1U_m = A_2$
\paragraph{3. Reduction of $A_2\langle k+ m/2 +1, m\rangle$} This is performed recursively, thus obtaining elimination matrices $\tilde{L}_m$ and $\tilde{U}_m$ (again a lower‑banded and an upper‑banded matrix of bandwidth $2m$); $\tilde{L}_mA_2\tilde{U}_m = A_{Final}$
\paragraph{Summary} Overall, $A\langle k, k + m\rangle$ is reduced
by multiplying on the left by lower-triangular elimination matrices and on the
right by upper-triangular elimination matrices
$\tilde{L}_mL'_mL_mAU_mU'_m\tilde{U}_m = A_{Final}$, where $A_{Final}\langle k, k+m\rangle$ is reduced.

\subsection{Cost analysis}

A recursive call on a strip of bandwidth $m$ performs two recursive calls on
strips of bandwidth $m/2$ and several banded matrix products, as described in steps 2 and 4. These matrix products are performed in parallel similarly to the proof of Lemma 4 with the following costs:
\begin{gather*}
\textstyle 
W = O\!\left(\frac{m^2 n}{p}\right),\quad
H = O\!\left(\frac{m^{4/3} n^{2/3}}{p^{2/3}}\right),\quad
S = O(1).
\end{gather*}

Next, we need to compute $L_{2m} = L_m L'_m L_m$ and $U_{2m} = U_m U'_m U_m$; this is done with the same costs, according to Lemma 4.

Thus, we can derive the following recurrence relations:
\begin{align*}
W(m, p) &=\textstyle 2\,W(m/2, p) + O\!\left(\frac{m^2 n}{p}\right),\\
H(m, p) &=\textstyle 2\,H(m/2, p) + O\!\left(\frac{m^{4/3} n^{2/3}}{p^{2/3}}\right),\\
S(m, p) &=\textstyle 2\,S(m/2, p) + O(1).
\end{align*}
The base values for the recurrences are provided by the respective costs of reducing a strip of width $n_0$, as described in Lemma 3:
\begin{align*}
W(n_0, q) = O(n_0^3) \qquad
H(n_0, q) = O(n_0^2) \qquad
S(n_0, q) = O(1)
\end{align*}

Since the recursion depth is $\alpha\log p$, and $n_0 = n/p^{\alpha}$, the recurrence relations are solved as follows:
\begin{gather*}
W(n, p) =\textstyle \sum_{i=0}^{\alpha\log p} O\!\Bigl(\frac{(2^i n_0)^2 n}{p}\Bigr)
 +2^{\alpha\log p}O({n_0}^{3}) 
= O\Bigl(\frac{n^3}{p}\Bigr)+O\Bigl(\frac{n^3}{p^{2\alpha}}\Bigr) = O\Big(\frac{n^3}{p}\Bigr),
\\
H(n, p) =\textstyle \sum_{i=0}^{\alpha \log p} O\Bigl(\frac{(2^i n_0)^{ \frac{4}{3}} n^{ \frac{2}{3}}}{p^{\frac{2}{3}}}\Bigr)+2^{\alpha\log p}O({n_0}^{2}) \\
 = \sum_{i=0}^{\alpha \log p} 2^{\frac{4i}{3}}\, O\Bigl(\frac{n_0^{\frac{4}{3}} n^{\frac{2}{3}}}{p^{\frac{2}{3}}}\Bigr) + 2^{\alpha\log p}O({n_0}^{2})
=\textstyle p^{\frac{4\alpha}{3}}\, O\Bigl(\frac{n^2}{p^{\frac{2}{3}
+\frac{4\alpha}{3}}}\Bigr) + p^{\alpha}O\Bigl(\frac{n^2}{p^{2\alpha}}\Bigr)\\
=\textstyle O\Bigl(\frac{n^2}{p^{\frac{2}{3}}}\Bigr) + O(\frac{n^2}{p^{\alpha}}) =
\begin{cases}
O\Bigl(\frac{n^2}{p^{\frac{2}{3}}}\Bigr) & \alpha \geq \alpha_{\max} = \frac{2}{3}
\\
O\Bigl(\frac{n^2}{p^{\alpha}}\Bigr) &  \text{otherwise}
\end{cases}
\\
S(n, p) \textstyle = O(2^{\alpha \log p}) = O(p^{\alpha}).
\end{gather*}
Since we need $W(n, p) = O\Bigl(\frac{n^3}{p}\Bigr)$ it must hold $\alpha\geq \alpha_{\min} =  1/2$.
Similarly to the block-recursive algorithm, parameter $\alpha$ describes a trade-off between communication and synchronization costs.
Thus, we only need to consider $\alpha_{\min}\leq\alpha \leq \alpha_{\max}$, since $\alpha\geq\alpha_{\min}$ is needed for the optimal local computation cost, and for $\alpha>\alpha_{\max}=  \frac{2}{3}$, the communication cost of matrix multiplication dominates, while the synchronization cost is higher than that of $\alpha = \alpha_{\max}$.
\section{Conclusion}
We have developed two algorithms for Bruhat decomposition in the BSP model: the block-recursive algorithm and the strip-recursive algorithm, and compared their effectiveness. The block-recursive algorithm uses a more familiar partition into blocks and a simpler recursive scheme. The strip-recursive algorithm requires some technical preparation (introducing special types of matrices and adapting existing parallel algorithms to such matrices). On the positive side, this algorithm uses a simpler recursion scheme and has lower cost. Both algorithms rely on efficient BSP matrix multiplication, and exhibit a trade-off between communication and synchronization costs. In the strip-recursive algorithm this trade-off is more efficient than in the block-recursive algorithm, matching the analogous trade-off by Tiskin \cite{Tiskin:07_FGCS} for $LU$-decomposition and generic pairwise elimination.
\bibliographystyle{plain}
\bibliography{research}

\end{document}